\newcommand{\blind}{1}
\newtheorem{thm}{Theorem}
\theoremstyle{definition}
\newtheorem{defn}{Definition}
\def\mxi{\boldsymbol{\xi}}
\def\bmu{\bm{u}}
\def\mv{\bm{v}}
\def\mA{\bm{A}}
\def\mB{\bm{B}}
\def\mH{\bm{H}}
\def\mI{\bm{I}}
\def\mI{\bm{I}}
\def\mM{\bm{M}}
\def\mP{\bm{P}}
\def\mU{\bm{U}}
\def\mV{\bm{V}}
\def\mX{\bm{X}}
\def\mZ{\bm{Z}}
\def\mZ{\bm Z}
\def\mA{\bm A}
\def\mB{\bm B}
\def\mI{\bm I}
\def\mX{\bm X}
\def\mM{\bm M}
\def\mSigma{\bm \Sigma}
\def\tZ{\mathcal{Z}}
\def\tA{\mathcal{A}}
\def\tC{\mathcal{C}}
\def\tM{\mathcal{M}}
\def\tN{\mathcal{N}}
\def\tO{\mathcal{O}}
\def\tS{\mathcal{S}}
\def\tX{\mathcal{X}}
\def\tZ{\mathcal{Z}}
\def\bbR{\mathbb{R}}
\newcommand{\FnormSize}[2]{#1\lVert#2#1\rVert_F}
\newcommand{\indep}{\perp \!\!\! \perp}
\def\bSig\mathbf{\Sigma}
\begin{document}

\if1\blind
{   \date{}
  \title{\bf Sufficient dimension reduction for feature matrices}
\author{Chanwoo Lee\\
Department of Statistics, University of Wisconsin-Madison\\ chanwoo.lee@wisc.edu}

    \maketitle
} \fi

\if0\blind
{
 \date{}
  \title{\bf Statistical and computational rates in high rank tensor estimation}
\author{}
\maketitle
} \fi

\begin{abstract}
We address the problem of sufficient dimension reduction for feature matrices, which arises often in sensor network localization, brain neuroimaging, and electroencephalography analysis. In general, feature matrices have both row- and column-wise interpretations and contain structural information that can be lost with naive vectorization approaches. To address this, we propose a method called principal support matrix machine (PSMM) for the matrix sufficient dimension reduction.
The PSMM converts the sufficient dimension reduction problem into a series of classification problems by dividing the response variables into slices. It effectively utilizes the matrix structure by finding hyperplanes with rank-1 normal matrix that optimally separate the sliced responses. Additionally, we extend our approach to the higher-order tensor case.
Our numerical analysis demonstrates that the PSMM outperforms existing methods and has strong interpretability in real data applications.
\end{abstract}
\noindent%
{\it Keywords:} Sufficient dimension reduction, Support matrix machine, Dimension folding, principal support vector machine

\section{ Introduction}

Matrix-valued datasets are ubiquitous in modern data science applications. For example, electroencephalography (EEG) data collects data from 122 subjects in two groups: an alcoholic and a control group. 
Each subject was exposed to a stimulus, and the scalp of the subjects was fitted with 64 electrodes that recorded voltage values for 256 time points. As a result, each sampling unit consists of a 256 $\times$ 64 matrix with a group label. Understanding the relationship between alcoholism and the voltage patterns across time and channels is of scientific interest.
Another example includes the MRN-114 human brain connectivity data. The data consists of $114$ subjects along with their Full Scale Intelligence Quotient (FSIQ) score. For each subject, a binary connectivity matrix among 68 brain regions is collected based on the Desikan atlas~\citep{desikan2006automated}. Learning from this matrix-valued dataset provides interesting insights about the association between intelligence and brain connectivity.

Let $\mX\in\mathbb{R}^{d_1\times d_2}$ be a matrix predictor and $Y\in\mathbb{R}$ be a response variable. We are interested in reducing the dimension of the matrix $\mX$ without losing the regression relation between $\mX$ and $Y$. One naive approach is to vectorize the feature matrix and apply classical dimension reduction methods to estimate a  matrix $\mM\in\mathbb{R}^{d_1d_2\times r}$ with $r<d_1d_2$ such that
\begin{align} \label{eq:sdrv}
Y\indep \text{vec}(\mX) | \mM^T\text{vec}(\mX),
\end{align}
where $\text{vec}\colon\mathbb{R}^{d_1\times d_2}\rightarrow\mathbb{R}^{d_1d_2}$ is a linear transformation that converts the matrix into a column vector. This classical sufficient dimension reduction problem has received much attention, and many methods have been proposed and studied \citep{li1991sliced,duan1991slicing,cook1991discussion,cook2005sufficient,Li2005ContourRA,li2007directional,artemiou2016sufficient}.

However, matrices are not simply vectors with additional indices; instead, they possess structural information which a simple vectorization approach fails to exploit. For instance, in EEG data, each row and column of matrix $\mX$ corresponds to a specific electrode channel and time point. If we were to vectorize this feature matrix, this meaningful interpretation would be lost. Similarly, in MRN-114 human brain connectivity data, the brain network is naturally represented as symmetric adjacency matrices $\mX$, where the values signify the presence or absence of fiber connections. Converting these matrices into vectors would cause the loss of symmetry, and the information contained within it could not be well utilized. Additionally, if the feature matrix is transformed into a vector by stacking its columns or rows, the resulting vector would have a very high dimensionality. This could lead to the curse of dimensionality. By keeping $\mX$ as a matrix, the number of parameters is reduced, and the accuracy of estimation can be improved.

To leverage the matrix structure, we consider the following objective to find two matrices $\mU\in\mathbb{R}^{d_1\times r_1}$ $(r_1< d_1)$ and $\mV\in\mathbb{R}^{d_2\times d_2}$ $(r_2< d_2)$ such that
\begin{align}\label{eq:main}
    Y\indep \mX|\mU^T\mX\mV.
\end{align}
Then, we keep all information of a feature matrix $\mX\in\mathbb{R}^{d_1\times d_2}$ to predict the response $Y$ reducing its dimension to $\mU^T\mX\mV\in\mathbb{R}^{r_1\times r_2}$. At the same time, we still preserve matrix structural information, including row- and column-wise interpretation.

Notice that the identifiable parameters are the subspaces spanned by the matrices $\mU$ and $\mV$, i.e., $\text{span}(\mU)$ and $\text{span}(\mV)$ respectively. In fact, \eqref{eq:main} is equivalent to
\begin{align}\label{eq:main2}
Y\indep \mX|(\mU\mA)^T\mX(\mV\mB),
\end{align}
for any nonsingular matrices $\mA\in\mathbb{R}^{r_1\times r_1}$ and $\mB\in\mathbb{R}^{r_2\times r_2}$. Thus, our goal is to estimate the column spaces of $\mU$ and $\mV$ rather than $\mU$ and $\mV$ themselves. We call the subspace induced by $\mU$ (and $\mV$) satisfying \eqref{eq:main}, row (and column) dimension reduction subspace. This notion was first proposed in \cite{li2010dimension} as left and right dimension-folding subspace.
We define the central row and column subspace similar to the central subspace defined in the vector case. In the vector case, it is well-known that the intersection of two dimension reduction subspaces is itself a dimension reduction subspace~\citep{chiaromonte2002sufficient,yin2008successive}. Similarly, in the matrix case, the intersection of row dimension reduction subspaces for $Y|\mX$ is again a row dimension reduction subspace under mild conditions~\citep{li2010dimension}. The same argument holds for the column dimension reduction subspace.
Thus, we define the central row and column subspaces in the following way.
\begin{defn}[Central subspace for  matrix]\label{def:central}
Define the central row and column subspaces as
\begin{align}
S^r_{Y|\mX} &= \underset{{\mU\colon Y\indep \mX|\mU^T\mX\mV}}{\cap} \text{span}(\mU)\quad\text{and}\quad  S^c_{Y|\mX} &= \underset{{\mV\colon Y\indep \mX|\mU^T\mX\mV}}{\cap} \text{span}(\mV).
\end{align}
The subspace $S_{Y|\mX} = S^r_{Y|\mX} \times S^c_{Y|\mX}$ is called the central subspace for matrices.
\end{defn}
We can also rewrite the central subspace as
\begin{align}
S_{Y|\mX} &= \underset{{(\mU,\mV)\colon Y\indep \mX|\mU^T\mX\mV}}{\cap} \text{span}(\mU)\times \text{span}(\mV).
\end{align}
The central dimension-folding subspace in \cite{li2010dimension,ding2015tensor} is equivalent to our central subspace with the Cartesian product $\times$ replaced by the Kronecker product $\otimes$. Here, we adopt the Cartesian product for a cleaner exposition and easier generalization to the higher-order tensor case. The extension to the central subspace for a higher-order tensor is presented in Section~\ref{sec:extend}.

In this paper, our goal is to estimate the central subspace for matrices and we propose the principal support matrix machine (PSMM).

\subsection{Related works and our contribution}
Our research is closely connected to, yet also has distinct differences from several existing lines of works. In this section, we review related literature and remark our contribution.
\paragraph{Principal support vector machine}  Our PSMM is closely related to the principal support vector machine (PSVM) proposed in \cite{li2011principal, artemiou2016sufficient}. The PSVM considers sufficient dimension reduction in \eqref{eq:sdrv} for the vector case. The main idea of the PSVM is to divide feature vectors into several slices based on the value of the responses and obtain hyperplanes that optimally separate these slices using support vector machine. The aggregation of these hyperplanes by applying principal component analysis provides a consistent estimator of the central subspace for vectors. However, the PSVM only allows for vector-valued predictors, and vectorizing matrix-valued predictors loses structural information and leads to high dimensionality. By contrast, our PSMM provides an efficient sufficient dimension reduction method for matrix predictors and successfully preserves structural information. We observe a clear improvement of matrix-based methods in our numerical studies.

\paragraph{Support matrix machine}
Applications of the support matrix machine (SMM) have shown great success in image classification, visual recognition, and EEG data analysis~\citep{pirsiavash2009bilinear,luo2015support}. The SMM is proposed and developed for the classification problem with matrix predictors. It considers the following formulation, extending support vector machines to the matrix case:
\begin{align}\label{eq:SMM}
    \min_{\mB,b} \|\mB\|_F^2 + \frac{\lambda}{n}\sum_{i=1}^n\left\{1-Y_i(\langle \mB,\mX_i-\bar \mX\rangle + b)\right\}_+,
\end{align}
where  $\{x\}_+ = \max(x,0)$, $\mB\in\mathbb{R}^{d_1\times d_2}$ is a coefficient matrix with low-rankness, $b\in\mathbb{R}$ is an intercept, and $\lambda$ is a positive penalty parameter. By imposing the low-rank structure on the coefficient normal matrix, the SMM utilizes the structural information of the feature matrix. We adapt this idea and apply it to the sufficient dimension reduction context. We turn the sufficient dimension reduction problem into a series of classification problems, where we use SMM techniques with modification. We demonstrate that the PSMM effectively estimates the central subspace for matrices, leveraging the benefits of SMM methods such as the consideration of matrix structure, robustness against outliers, and efficiency in high-dimensional settings.

\paragraph{Dimension folding} \cite{li2010dimension,ding2015tensor} introduced the central dimension folding subspace, which is equivalent to the central subspace for matrices as defined in Definition~\ref{def:central}. They also proposed methods for estimating the central subspace generalizing existing inverse regression-based methods to the matrix and higher-order tensor case. Such methods include sliced inverse regression (SIR)~\citep{li1991sliced}, the sliced average variance estimator (SAVE)~\citep{cook1991discussion}, and directional regression (DR)~\citep{li2007directional}. However, as pointed out in \cite{li2011principal}, such methods tend to downweight the slice means near the center of data due to its shorter length. This characteristics often makes these methods inaccurate since it is known that a regression surface is well estimated at the center of the data. In contrast, the PSMM finds coefficient normal matrices that optimally separate data points depending on the sliced responses. This approach allows the appropriate use of data points near the center. We demonstrate that the PSMM indeed improves accuracy over inverse regression-based methods in Section~\ref{sec:sim}.

\subsection{Notation and organization}
 We use the shorthand $[n]$ to denote $\{1,\ldots,n\}$ for $n\in\mathbb{N}_{+}$. For any two matrices $\mA,\mB\in\mathbb{R}^{d_1\times d_2}$, the inner product of two matrices is defined as $\langle\mA, \mB\rangle = \sum_{(i,j)\in[d_1]\times[d_2]}\mA_{ij}\mB_{ij}$. 
For a matrix $\mA\in\mathbb{R}^{d_1\times d_2}$, we use $\lambda_i(\mA)$ to denote i-th largest eigenvalue of $\mA$ and $\FnormSize{}{\mA}= \sqrt{\sum_{(i_1,i_2)\in[d_1\times[d_2]}\mA_{i_1i_2}^2}$ to denote its Frobenius norm. 
Let $\tA\in\bbR^{d_1\times \cdots \times d_K}$ be an order-$K$ $(d_1,\ldots,d_K)$-dimensional tensor and $\tA_{i_1,\ldots,i_K}$  the tensor entry indexed by $(i_1,\ldots,i_K)\in[d_1]\times\cdots\times[d_K]$.  We define Frobenius norm of tensor $\tA$ as $\FnormSize{}{\tA} = \sqrt{\sum_{(i_1,\ldots,i_K)\in[d_1]\times \cdots\times [d_K]}\tA_{i_1,\ldots,i_K}^2}$. The multilinear multiplication of a tensor $\tC\in\mathbb{R}^{r_1,\ldots,r_K}$ by matrices $\mU_k\in\mathbb{R}^{d_k\times r_k}$, $k\in[K]$ is defined as 
\begin{align}
    (\tC\times_1\mU_1\times\cdots\times_K\mU_K)_{i_1,\ldots,i_K} = \sum_{j_1 = 1}^{r_1}\cdots\sum_{j_K = 1}^{r_K}\tC_{j_1,\ldots,j_d}(\mU_1)_{i_1j_1}\cdots(\mU_K)_{i_Kj_K},
\end{align}
which results in an order-$K$ $(d_1,\ldots,d_K)$-dimensional tensor.
 
The rest of the paper is organized as follows. Section~\ref{sec:psmm} introduces an objective function of the PSMM at the population level and constructs the unbiasedness of the estimator. We then present the estimation procedure for the matrix sufficient dimension reduction at the sample level in Section~\ref{sec:psmmalg}. In Section~\ref{sec:extend}, we extend all the results of the matrix case to the higher-order tensor case. Synthetic and real data analyses are presented in Section~\ref{sec:sim}. We conclude the paper with a discussion in Section~\ref{sec:disc}.

\section{Principal support matrix machine at the population level}\label{sec:psmm}
In this section, we present an objective function of the PSMM at the population level and provide intuition of the PSMM for the  matrix sufficient dimension reduction.

We first consider the binary classification problem for feature matrices. For now, we assume the response $Y$ to be binary values of -1 or 1. We introduce the rank-1 support matrix machine (SMM) with the samples $\{(Y_i,\mX_i)\}_{i=1}^n$. Plugging in the rank-1 coefficient matrix $\mB = \bmu\mv^T$ into the SMM in Equation \eqref{eq:SMM} yields the rank-1 SMM:
\begin{align}\label{eq:1smm}
    \min_{(\bmu,\mv,t)\in\mathbb{R}^{d_1}\times\mathbb{R}^{d_2}\times\mathbb{R}}\quad (\bmu^T\bmu)(\mv^T\mv) + \frac{\lambda}{n}\sum_{i=1}^n\left\{1-Y_i[\bmu^T(\mX_i-\bar\mX)\mv-t]\right\}_+.
\end{align}
This SMM objective function is an extension of the SVM, and the rank-1 constraint helps to utilize structural information  of matrix predictors. The solution of Equation \eqref{eq:1smm}, denoted as $(\bmu^*,\mv^*,t^*)$, defines the optimal hyperplane $\{\mX\colon (\bmu^*)^T\mX\mv^* = t^*\}$ that separates the two spaces $\{\mX_i\colon Y_i = 1\}$ and $\{\mX_i\colon Y_i = -1\}$.

Now we consider the matrix sufficient dimension reduction problem, where the response $Y$ can be continuous variable.
Let $\Omega_Y$ be the support of $Y$. Let $A_1$ and $A_2$ be arbitrary disjoint subsets of $\Omega_Y$. Define  the discrete random variable $\tilde Y$ such that
\begin{align}
    \tilde Y = \mathds{1}\{Y\in A_1)-\mathds{1}\{Y\in A_2\}.
\end{align}
We propose the following objective function at the population level for the matrix sufficient dimension reduction:
\begin{align}\label{eq:obj}
    L(\bmu,\mv,t) &= \text{Var}(\bmu^T\mX\mv)+\lambda \mathbb{E}\left\{1-\tilde Y(\bmu^T(\mX-\mathbb{E}(\mX))\mv-t)\right\}_+.
\end{align}
Compared to the rank-1 SMM in Equation \eqref{eq:1smm}, we consider the variance factor of $\mX$. If $\text{Var}\left[\text{vec}(\mX)\right] = I_{d_1+d_2}$, the objective function in Equation \eqref{eq:obj} reduces to the population version of Equation \eqref{eq:1smm}. This variance consideration establishes the unbiasedness of an  estimator which minimizes \eqref{eq:obj} for the central subspace, as suggested in the following theorem.
\begin{thm}\label{thm:unbiased}
Suppose that $\mathbb{E}(\mX|\mU^T\mX\mV)$ is a bilinear function of $\mU^T\mX\mV$, where  $\mU$ and $\mV$ are matrices as defined in \eqref{eq:main}. If $(\bmu^*,\mv^*,t^*)$ minimizes the objective function \eqref{eq:obj} among all $(\bmu,\mv,t)\in\mathbb{R}^{d_1}\times\mathbb{R}^{d_2}\times \mathbb{R}$, then $(\bmu^*,\mv^*)\in S_{Y|\mX}.$
\end{thm}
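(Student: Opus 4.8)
The plan is to reformulate the objective in vectorized coordinates and then mimic the unbiasedness argument for the principal support vector machine, the new ingredient being that the competitor we construct must retain the rank-one (Kronecker) structure. Write $\bm{x}=\text{vec}(\mX-\mathbb{E}\mX)$, $\Sigma=\text{Var}(\text{vec}(\mX))$, and $B=\mV\otimes\mU$, so that $\text{vec}(\mU^T(\mX-\mathbb{E}\mX)\mV)=B^T\bm{x}$ and $\bmu^T(\mX-\mathbb{E}\mX)\mv=\bm{\beta}^T\bm{x}$ with $\bm{\beta}=\mv\otimes\bmu$. In these coordinates the objective reads $L=\bm{\beta}^T\Sigma\bm{\beta}+\lambda\,\mathbb{E}\{1-\tilde Y(\bm{\beta}^T\bm{x}-t)\}_+$, i.e. the PSVM objective restricted to the rank-one cone $\{\mv\otimes\bmu\}$. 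The overall strategy is: from any $(\bmu,\mv,t)$ build a competitor $(\bmu',\mv',t)$ with $\bmu'\in S^r_{Y|\mX}$ and $\mv'\in S^c_{Y|\mX}$ for which $L$ does not increase and decreases strictly in the variance part unless $(\bmu,\mv)$ already has the right form; applied at the minimizer this forces $(\bmu^*,\mv^*)\in S_{Y|\mX}$.

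For the hinge term I would condition on $(\mU^T\mX\mV,\tilde Y)$ and use convexity. Since $\tilde Y$ is a function of $Y$ and $Y\indep\mX\mid\mU^T\mX\mV$ by \eqref{eq:main}, we have $\tilde Y\indep\mX\mid\mU^T\mX\mV$, hence $\mathbb{E}[\mX-\mathbb{E}\mX\mid\mU^T\mX\mV,\tilde Y]=\mathbb{E}[\mX-\mathbb{E}\mX\mid\mU^T\mX\mV]$, which by the bilinearity assumption equals $C\,\mU^T(\mX-\mathbb{E}\mX)\mV\,D^T$ for fixed matrices $C,D$. Jensen's inequality applied to the convex map $s\mapsto\{1-\tilde Y(s-t)\}_+$ then yields $\mathbb{E}\{1-\tilde Y(\bmu^T(\mX-\mathbb{E}\mX)\mv-t)\}_+\ge\mathbb{E}\{1-\tilde Y((\bmu')^T(\mX-\mathbb{E}\mX)\mv'-t)\}_+$, where $\bmu'=\mU C^T\bmu$ and $\mv'=\mV D^T\mv$; by construction $\bmu'\in\text{span}(\mU)=S^r_{Y|\mX}$ and $\mv'\in\text{span}(\mV)=S^c_{Y|\mX}$. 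In vectorized form $\bm{\beta}':=\mv'\otimes\bmu'=(\mV\otimes\mU)(D\otimes C)^T\bm{\beta}=P^T\bm{\beta}$ with $P=(D\otimes C)B^T$. The crucial point, and the place the bilinear (rather than merely linear) conditional mean is essential, is that this $P$ is exactly the $L^2$ projection $\Sigma B(B^T\Sigma B)^{-1}B^T$: the normal equations for the conditional mean, $\mathbb{E}[(\bm{x}-(D\otimes C)B^T\bm{x})(B^T\bm{x})^T]=0$, give $(D\otimes C)=\Sigma B(B^T\Sigma B)^{-1}$. Thus the competitor stays rank-one and simultaneously coincides with the variance-metric projection.

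For the variance term I would exploit that $P$ is $\Sigma$-self-adjoint in the sense that $P\Sigma$ is symmetric and $P^2=P$. A short computation gives $P\Sigma P^T=P\Sigma$, hence $\Sigma-P\Sigma P^T=(I-P)\Sigma=(I-P)\Sigma(I-P)^T\succeq0$, so that $\text{Var}((\bmu')^T\mX\mv')=(\bm{\beta}')^T\Sigma\bm{\beta}'\le\bm{\beta}^T\Sigma\bm{\beta}=\text{Var}(\bmu^T\mX\mv)$, with equality iff $(I-P^T)\bm{\beta}=0$. Combining this with the hinge bound gives $L(\bmu',\mv',t)\le L(\bmu,\mv,t)$ for every $(\bmu,\mv,t)$, the variance gap being $\|(I-P^T)\bm{\beta}\|_\Sigma^2$.

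Finally I would invoke minimality. Taking $(\bmu,\mv,t)=(\bmu^*,\mv^*,t^*)$, the competitor $(\bmu',\mv',t^*)$ cannot beat the minimizer, so both the hinge and (since $\lambda>0$) the variance parts are equalities; assuming $\Sigma\succ0$, the variance equality forces $\bm{\beta}^*=P^T\bm{\beta}^*=(\mv^*)'\otimes(\bmu^*)'$. Two nonzero vectors with equal outer products are proportional, so $\bmu^*$ and $\mv^*$ are scalar multiples of $(\bmu^*)'\in S^r_{Y|\mX}$ and $(\mv^*)'\in S^c_{Y|\mX}$, giving $(\bmu^*,\mv^*)\in S_{Y|\mX}$ (degenerate minimizers with $\bmu^*=0$ or $\mv^*=0$ being excluded by the usual normalization). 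I expect the main obstacle to be precisely the content of the second paragraph: verifying that the bilinear conditional-mean map coincides with the $\Sigma$-orthogonal projection $P$, which is what lets a single competitor control the hinge loss (via Jensen) and the variance (via Pythagoras) at once while staying on the rank-one cone; a naive coordinatewise projection would create cross terms such as $P_c\mv\otimes Q_r\bmu$ that leave the cone and break the argument.
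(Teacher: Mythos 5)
Your proposal is correct, and it shares the paper's skeleton --- build an in-span competitor from the conditional mean, control the hinge term by conditioning plus Jensen, show the variance term cannot increase, and invoke minimality --- but the two middle steps are executed differently, and the comparison is instructive. For the variance, the paper simply applies the law of total variance, $\text{Var}(\bmu^T\mX\mv)\geq \text{Var}\left(\mathbb{E}[\bmu^T\mX\mv\mid \mU^T\mX\mV]\right)$, which yields the inequality with no matrix algebra at all; your identification of the bilinear conditional mean with the $\Sigma$-orthogonal projection $P=\Sigma B(B^T\Sigma B)^{-1}B^T$ via the normal equations is valid (a conditional expectation that happens to be linear must coincide with the best linear predictor), but it is extra machinery the paper never needs, and it costs you the additional hypotheses that $B^T\Sigma B$ be invertible and $\Sigma\succ 0$. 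Note also that the projection identity uses only \emph{linearity} of the conditional mean; the bilinear (Kronecker) form matters solely for keeping the competitor $\bm{\beta}'=\mv'\otimes\bmu'$ on the rank-one cone --- which your closing sentence correctly identifies, and which the paper gets for free by writing $\mathbb{E}[\bmu^T\mX\mv\mid \mU^T\mX\mV]=(\mU\eta_r)^T\mX(\mV\eta_c)$. Where your route genuinely pays off is the equality case: the paper disposes of strictness by asserting that $(\bmu,\mv)\notin S_{Y|\mX}$ implies $\text{Var}(\bmu^T\mX\mv\mid \mU^T\mX\mV)>0$, an unproved distributional nondegeneracy claim, whereas your chain --- equality forces $\bm{\beta}^*=P^T\bm{\beta}^*$, hence $\mv^*\otimes\bmu^*=\mv'\otimes\bmu'$, and equal nonzero rank-one products have proportional factors --- is an explicit algebraic argument, modulo the same zero-minimizer caveat that both you and the paper wave off. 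Finally, both proofs implicitly take $\text{span}(\mU)=S^r_{Y|\mX}$ and $\text{span}(\mV)=S^c_{Y|\mX}$ (i.e., $(\mU,\mV)$ a basis of the central subspace) in order to land the conclusion in $S_{Y|\mX}$ rather than merely in $\text{span}(\mU)\times\text{span}(\mV)$; stating that choice explicitly would tighten either version.
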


\begin{proof}
Without loss of generality, assume that $\mathbb{E}(\mX) = 0_{d_1\times d_2}$. Notice that 
\begin{align}
    \mathbb{E}\left[1-\tilde Y(\bmu^T\mX\mv-t)\right]_+ = \mathbb{E}\left[\mathbb{E}\left[(1-\tilde Y(\bmu^T\mX\mv-t))_+|Y,\mU^T\mX\mV\right]\right]
\end{align}
By Jensen's inequality, we have 
\begin{align}
    \mathbb{E}&\left[(1-\tilde Y(\bmu^T\mX\mv-t))^+|Y,\mU^T\mX\mV\right]\\&\geq \mathbb{E}\left[(1-\tilde Y(\bmu^T\mX\mv-t))|Y,\mU^T\mX\mV\right]_+
    \\&=\left[1-\tilde Y\left[\mathbb{E}((\bmu^T\mX\mv)|Y,\mU^T\mX\mV)-t\right]\right]_+,
\end{align}
where the first equality follows from $Y\indep \mX|\mU^T\mX\mV$. 
Therefore, we have
\begin{align}\label{eq:mean}
    \mathbb{E}\left[1-\tilde Y(\bmu^T\mX\mv-t)\right]_+&\geq\mathbb{E}\left[1-\tilde Y\left[\mathbb{E}((\bmu^T\mX\mv)|Y,\mU^T\mX\mV)-t\right]\right]_+
    \nonumber\\&= \mathbb{E}\left[1-\tilde Y\left[(\mU\eta_r)^T\mX(\mV\eta_c)-t\right]\right]_+,
\end{align}
 for some $\eta_r\in\mathbb{R}^{r_1},\eta_c\in\mathbb{R}^{r_2}$. We can always find such $\eta_r,\eta_c$ because $\mathbb{E}(\mX|\mU^T\mX\mV)$ is a bilinear function of $\mU^T\mX\mV$.
Also, notice that 
\begin{align}\label{eq:var}
    \text{Var}&(\bmu^T\mX\mv)\nonumber \\&= \text{Var}\left[\mathbb{E}\left(\bmu^T\mX\mv|\mU^T\mX\mV\right)\right]+\mathbb{E}\left[\text{Var}\left(\bmu^T\mX\mv|\mU^T\mX\mV\right)\right]\nonumber\\
    &\geq\text{Var}\left[\mathbb{E}\left(\bmu^T\mX\mv|\mU^T\mX\mV\right)\right]\nonumber\\
    &=\text{Var}\left[(\mU\eta_r)^T\mX(\mV\eta_c)\right],
\end{align}
where the last equality is from bilinearity of the conditional expectation.
Combining \eqref{eq:mean} and \eqref{eq:var} into the objective function in \eqref{eq:obj} yields,
\begin{align}\label{eq:2obj}
    L(\bmu,\mv,t)&\geq \text{Var}\left[(\mU\eta_r)^T\mX(\mV\eta_c)\right]+\lambda\mathbb{E}\left[1-\tilde Y\left[(\mU\eta_r)^T\mX(\mV\eta_c)-t\right]\right]_+\nonumber\\&\geq L(\mU\eta_r,\mV\eta_c,t).
\end{align}
Suppose $(\bmu,\mv)\notin S_{Y|\mX}$, then $\text{Var}\left(\bmu^T\mX\mv|\mU^T\mX\mV\right)>0$, which implies the strict inequality in \eqref{eq:var}. Thus, the inequality in\eqref{eq:2obj} is strict. This strict inequality in \eqref{eq:2obj} proves that  $(\bmu,\mv)$ cannot be the minimizer of $L(\bmu,\mv,t)$ unless $(\bmu,\mv)\in S_{Y|\mX}$.
\end{proof}

The bilinearity condition on $\mathbb{E}(\mX|\mU^T\mX\mV)$ is a generalization of the linearity condition  which is well-known and commonly assumed in the sufficient dimension reduction literature for the vector case \citep{li2009dimension,li2011principal,artemiou2016sufficient}.

Theorem \ref{thm:unbiased} implies that we can estimate the central subspace $S_{Y|\mX}$ by minimizing a series of the objective functions in Equation \eqref{eq:obj} with different $\tilde Y$s. We propose an estimation procedure of the PSMM at the sample level based on this intuition in the next section.

\section{Estimation procedure for the matrix sufficient dimension reduction}\label{sec:psmmalg}
We first introduce flip-flop algorithm to estimate mean and covariance matrices from i.i.d. sample $\{(\mX_i,Y_i)\}_{i=1}^n$. We then present the PSMM procedure to estimate the central subspace for matrices at the sample level.
\subsection{Flip-flop algorithm}
We assume that the feature matrix $\mX$ follows the matrix normal distribution, $\mX\sim \tM\tN_{d_1,d_2}(\mM, \mSigma_r,\mSigma_c)$, of which covariance matrix  has the form of,
\begin{align}\label{eq:Xcov}
    \text{Var}\left[\text{vec}(\mX)\right] = \mSigma_c\otimes \mSigma_r.
\end{align}
This covariance form \eqref{eq:Xcov} simplifies the objective function in \eqref{eq:obj} to
\begin{align}\label{eq:obj2}
L(\bmu,\mv,t) = (\bmu^T\mSigma_r\bmu)(\mv^T\mSigma_c\mv) + \lambda\mathbb{E}\left\{1-\tilde Y(\bmu^T(\mX-\mathbb{E}(\mX))\mv-t)\right\}_+.
\end{align}
In the sample level, we need to estimate mean and covariance matrices of the feature matrix $\mX$ for the objective function $L(\bmu,\mv,t)$ in \eqref{eq:obj2}. We propose a flip-flop algorithm for the estimation.

Let $\{(\mX_i,Y_i)\}_{i=1}^n$ be an i.i.d. sample of $(\mX,Y)\in\mathbb{R}^{d_1\times d_2}\times \mathbb{R}$. Given the sample matrices $\{\mX_i\}_{i=1}^n$, we estimate the mean and covariance matrices by
\begin{align}\label{eq:cov}
\bar \mX &= \frac{1}{n}\sum_{i=1}^n \mX_i\nonumber\\
    \hat\mSigma_r &= \frac{1}{d_2n}\sum_{i=1}^n (\mX_i-\bar \mX)\hat\mSigma_c^{-1}(\mX_i-\bar\mX)^T\nonumber\\
    \hat\mSigma_c &= \frac{1}{d_1n}\sum_{i=1}^n (\mX_i-\bar \mX)^T\hat\mSigma_r^{-1}(\mX_i-\bar\mX).
\end{align}
The covariance parameters does not have closed form unlike the mean parameter because two covariance matrices depend on each other. Thus, we compute their estimates iteratively until convergence based on \eqref{eq:cov}, which is known as “flip-flop” algorithm~\citep{dutilleul1999mle,glanz2018expectation}.
Notice that estimates for the mean and covariance matrices in \eqref{eq:cov} are the maximum likelihood estimator (MLE) when feature matrix follows the matrix normal distribution.  Details about MLE properties of the matrix normal distribution can be found in \cite{ros2016existence}.

There have been extensive studies about characteristics and statistical guarantees of the flip-flop algorithm.
The flip-flop algorithm is well known to converge to positive definite covariance matrices if and only if $n\geq \max(d_1/d_2,d_2/d_1)+1$~\citep{dutilleul1999mle}. More recently, \cite{franks2021near} provided the near-optimal sample complexity and established statistical guarantees of the flip-flop algorithm under the condition $n\geq C \frac{d_1}{d_1}\max\{\log d_2,\log^2 d_1\}$ where $C>0$ and $1<d_1\leq d_2$. In addition, they generalized all results to the higher-order tensor case. We leverage these results and use the outputs from the flip-flop algorithm to estimate the central subspace for matrices.

\subsection{The PSMM algorithm}\label{sec:procedure}
Now we present the PSMM algorithm for the matrix sufficient dimension reduction based on the observed samples. Suppose that the structural dimension $(r_1,r_2)$ of the central space $S_{Y|\mX}$ is known for now, i.e., $\text{dim}(S^{r}_{Y|\mX}) = r_1$ and $\text{dim}(S^{c}_{Y|\mX}) = r_2$. Unknown structural dimension case will be discussed in Section~\ref{sec:rank}. We summarize the estimation procedure as follows.
\begin{enumerate}
    \item[Step 1.] Calculate the sample mean $\bar \mX$ and covariance matrices $(\hat\mSigma_r,\hat\mSigma_c)$ using the flip-flop algorithm  in \eqref{eq:cov}.
    \item[Step 2.] Let $q_h$  be $(h/H)$-percentile of sample $\{Y_1,\ldots,Y_n\}$  for $h \in [H]$. Define $\tilde Y_i^h = \mathds{1}\{Y_i>q_h\}-\mathds{1}\{Y_i\leq q_h\}$ for each $h\in[H]$.
    
    \item[Step 3.] For each $h\in[H]$, find a solution $(\tilde \bmu^h, \tilde \mv^h,\tilde t^h)$ which minimizes
    \begin{align}\label{eq:sampleobj}
        (\bmu^T\hat\mSigma_r\bmu)(\mv^T\hat\mSigma_c\mv) + \frac{\lambda}{n}\sum_{i=1}^n\left\{1-\tilde Y_i^h(\bmu^T(\mX_i-\bar\mX)\mv-t)\right\}_+.
    \end{align}
    \item[Step 4.] Calculate the $r_1$ leading eigenvectors $(\hat \bmu_1,\hat \bmu_2,\ldots,\hat\bmu_{r_1})$ of $\hat \mU_n$ and $r_2$ leading eigenvectors $(\hat\mv_1,\hat\mv_2,\ldots,\hat\mv_{r_2})$ of $\hat \mV_n$, where we define
    \begin{align}\label{eq:construc}
        \hat\mU_n = \sum_{h = 1}^{H}\tilde\bmu^h(\tilde\bmu^h)^T\quad\text{and}\quad \hat\mV_n = \sum_{h = 1}^{H}\tilde\mv^h(\tilde\mv^h)^T.
    \end{align}
    \item[Step 5.] Estimate the central subspace $S_{Y|\mX}$ by 
    \begin{align}
        \hat S_{Y|\mX} = \text{span}(\{\hat\bmu_1,\ldots,\hat\bmu_{r_1}\})\times \text{span}(\{\hat\mv_1,\ldots,\hat\mv_{r_2}\}).
    \end{align}
\end{enumerate}
Step 3 optimizes the PSMM objective function at the sample level from \eqref{eq:obj2}. This objective function \eqref{eq:sampleobj} is  bi-convex  such that it is convex in $\bmu$ for fixed $\mv\in\mathbb{R}^{d_2}$ and convex in $\mv$ for fixed $\bmu\in\mathbb{R}^{d_1}$.
Thus, we minimize the equation \eqref{eq:sampleobj} using coordinate descent algorithm which solves convex optimization problem for one set of parameters holding the other fixed. We  update parameters $\bmu$ and $\mv$ iteratively based on Theorem~\ref{thm:update}.

Step 4 is to align components of column and row dimension reduction subspaces based on principal component analysis.

\begin{thm}\label{thm:update}
\begin{enumerate}
    \item If $\bmu^*$ minimizes \eqref{eq:sampleobj} over $\mathbb{R}^{d_1}$ for fixed $\mv\in\mathbb{R}^{d_2}$, then 
    \begin{align}
        \bmu^* = \frac{1}{2}\sum_{i=1}^n (\alpha_i^*\tilde Y_i^h)\frac{\hat\mSigma_r^{-1}(\mX_i-\bar\mX)\mv}{\mv^T\hat\mSigma_c\mv},
    \end{align}
    where $(\alpha^*_1,\ldots,\alpha^*_n)$ is the solution to the quadratic programming problem:
    \begin{align}
        \text{minimize }& -\sum_{i=1}^n\alpha_i +\frac{1}{4}\sum_{i=1}^n\sum_{j=1}^n\alpha_i\alpha_j\tilde Y_i^h\tilde Y_j^h\frac{\left((\mX_i-\bar\mX)\mv\right)^T\hat\mSigma_r^{-1}\left((\mX_i-\bar\mX)\mv\right)}{\mv^T\hat\mSigma_c\mv},\\&\text{subject to } \sum_{i=1}^n\alpha_i\tilde Y_i^h = 0 \text{ and } 0\leq \alpha_i\leq \frac{\lambda}{n} \text{ for all } i\in[n].
    \end{align}
    
    \item  If $\mv^*$ minimizes \eqref{eq:sampleobj} over $\mathbb{R}^{d_2}$ for fixed $\bmu\in\mathbb{R}^{d_1}$, then 
    \begin{align}
        \mv^* = \frac{1}{2}\sum_{i=1}^n (\beta_i^*\tilde Y_i^h)\frac{\hat\mSigma_c^{-1}(\mX_i-\bar\mX)^T\bmu}{\bmu^T\hat\mSigma_r\bmu},
    \end{align}
    where $(\beta^*_1,\ldots,\beta^*_n)$ is the solution to the quadratic programming problem:
    \begin{align}
        \text{minimize }& -\sum_{i=1}^n\beta_i +\frac{1}{4}\sum_{i=1}^n\sum_{j=1}^n\beta_i\beta_j\tilde Y_i^h\tilde Y_j^h\frac{\left((\mX_i-\bar\mX)^T\bmu\right)^T\hat\mSigma_c^{-1}\left((\mX_i-\bar\mX)^T\bmu\right)}{\bmu^T\hat\mSigma_r\bmu},\\&\text{subject to } \sum_{i=1}^n\beta_i\tilde Y_i^h = 0 \text{ and } 0\leq \beta_i\leq \frac{\lambda}{n} \text{ for all } i\in[n].
    \end{align}
\end{enumerate}
\end{thm}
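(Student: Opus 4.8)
The plan is to view \eqref{eq:sampleobj}, for a fixed $\mv$, as a soft-margin support vector machine in $(\bmu, t)$ with a weighted quadratic penalty, and to read off the claimed representation from the Karush--Kuhn--Tucker (KKT) conditions of its Lagrangian dual; Part~2 is then the verbatim transpose of Part~1.

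First I would freeze $\mv$ and abbreviate the positive scalar $c := \mv^T\hat\mSigma_c\mv$ together with the vectors $\bm{z}_i := (\mX_i - \bar\mX)\mv \in \mathbb{R}^{d_1}$, so that \eqref{eq:sampleobj} becomes $c\,(\bmu^T\hat\mSigma_r\bmu) + \frac{\lambda}{n}\sum_{i=1}^n\{1 - \tilde Y_i^h(\bmu^T\bm{z}_i - t)\}_+$. Introducing slack variables $\xi_i$, I rewrite this as the convex quadratic program $\min_{\bmu,t,\xi}\ c\,(\bmu^T\hat\mSigma_r\bmu) + \frac{\lambda}{n}\sum_i\xi_i$ subject to $\xi_i \ge 1 - \tilde Y_i^h(\bmu^T\bm{z}_i - t)$ and $\xi_i \ge 0$. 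This is exactly the standard SVM primal after rescaling, the only novelty being that the penalty is the $\hat\mSigma_r$-weighted norm of $\bmu$ rather than the Euclidean norm.

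Next I would attach multipliers $\alpha_i \ge 0$ and $\mu_i \ge 0$ to the two families of constraints, form the Lagrangian, and impose stationarity. Differentiating in $\bmu$ gives $2c\,\hat\mSigma_r\bmu = \sum_i\alpha_i\tilde Y_i^h\bm{z}_i$, hence $\bmu = \frac{1}{2c}\hat\mSigma_r^{-1}\sum_i\alpha_i\tilde Y_i^h\bm{z}_i$, which is precisely the stated formula once $c$ and $\bm{z}_i$ are expanded. Differentiating in $t$ yields the equality constraint $\sum_i\alpha_i\tilde Y_i^h = 0$, and differentiating in $\xi_i$ yields $\alpha_i + \mu_i = \lambda/n$, which together with $\alpha_i, \mu_i \ge 0$ gives the box constraint $0 \le \alpha_i \le \lambda/n$. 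Substituting $\bmu$ back into the Lagrangian collapses it to the dual objective $\sum_i\alpha_i - \frac{1}{4c}\sum_{i,j}\alpha_i\alpha_j\tilde Y_i^h\tilde Y_j^h\,\bm{z}_i^T\hat\mSigma_r^{-1}\bm{z}_j$; negating to match the paper's minimization and re-expanding $c$ and $\bm{z}_i$ recovers the stated quadratic program. Part~2 is obtained by repeating the argument with $(\hat\mSigma_r, \bm{z}_i, c)$ replaced by $(\hat\mSigma_c,\ (\mX_i-\bar\mX)^T\bmu,\ \bmu^T\hat\mSigma_r\bmu)$.

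The two points requiring genuine care, rather than routine calculation, are the following. First, strong duality must hold so that the KKT stationarity equations actually characterize the minimizer: since the program is a convex quadratic program with affine constraints and the point $(\bmu, t, \xi) = (0, 0, 2\cdot\mathbf{1})$ is strictly feasible, Slater's condition is satisfied and there is no duality gap. Second, both the representation of $\bmu^*$ and the dual kernel require $\hat\mSigma_r$ (and, in Part~2, $\hat\mSigma_c$) to be invertible; I would flag explicitly that this is guaranteed under the flip-flop convergence condition $n \ge \max(d_1/d_2, d_2/d_1) + 1$ recalled in Section~\ref{sec:psmmalg}. Everything else is the classical SVM-duality computation.
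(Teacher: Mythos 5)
Your proposal is correct and takes essentially the same route as the paper's proof: both introduce slack variables, form the Lagrangian of the resulting convex quadratic program for fixed $\mv$, read the representation of $\bmu^*$ and the dual constraints off the stationarity conditions in $(\bmu, t, \xi)$, and substitute back to obtain the dual QP, the only cosmetic difference being that the paper first whitens via $\tilde\bmu = \hat\mSigma_r^{1/2}\bmu$, $\mZ_i = \hat\mSigma_r^{-1/2}(\mX_i-\bar\mX)\hat\mSigma_c^{-1/2}$ and then changes back, while you dualize directly in the $\hat\mSigma_r$-weighted norm. Your explicit verification of Slater's condition and of the invertibility of $\hat\mSigma_r$, and your dual kernel with cross indices $\left((\mX_i-\bar\mX)\mv\right)^T\hat\mSigma_r^{-1}\left((\mX_j-\bar\mX)\mv\right)$, are minor refinements on the paper, whose displayed dual repeats the index $i$ in both factors (evidently a typo, present in both the theorem statement and its proof).
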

\begin{proof}
Define $\tilde\bmu = \hat\mSigma_r^{1/2}\bmu$, $\tilde\mv = \hat\mSigma_c^{1/2}\mv$, and $\mZ_i = \hat\mSigma_r^{-1/2}(\mX_i-\bar\mX)\hat\mSigma_c^{-1/2}$. Then \eqref{eq:sampleobj} is equivalent to 
\begin{align}\label{eq:smmp}
    \|\tilde\bmu\|^2\|\tilde\mv\|^2 + \frac{\lambda}{n}\sum_{i=1}^n \left\{1- \tilde{Y}_i^h(\tilde\bmu^T\mZ_i\tilde\mv-t)\right\}_+.
\end{align}
Suppose that $\tilde\mv\in\mathbb{R}^{d_2}$ is fixed. Minimizing \eqref{eq:smmp} is then equivalent to 
\begin{align}
    \min_{\tilde \bmu, \mxi, t}& \|\tilde \bmu\|^2\|\tilde \mv\|^2 + \frac{\lambda}{n}\sum_{i=1}^n \xi_i,\\
    &\text{subject to } \tilde Y_i^h(\tilde \bmu^T \mZ_i\tilde \mv-t)\geq 1-\xi_i \text{ and } \xi_i\geq 0 \text { for all } i\in[n].
\end{align}
The Lagrange primal function is 
\begin{align}
    L_P = \|\tilde \bmu\|^2\|\tilde \mv\|^2 + \frac{\lambda}{n}\sum_{i=1}^n \xi_i-\sum_{i=1}^n\alpha_i\left[Y_i^h(\tilde \bmu^T \mZ_i\tilde \mv-t)-( 1-\xi_i)\right]-\sum_{i=1}^n \mu_i\xi_i,
\end{align}
which we minimize with respect to $\tilde \bmu, t,$ and $\xi_i$. Checking the first order condition yields 
\begin{align}\label{eq:formula}
    \tilde \bmu &= \sum_{i=1}^n \alpha_i \tilde Y_i^h \frac{\mZ_i \tilde \mv}{2\|\tilde\mv\|^2},\\
    0 &= \sum_{i=1}^n \alpha_i \tilde Y_i^h,\\
    \alpha_i &= \frac{\lambda}{n}-\mu_i \text{ for all } i\in[n],
\end{align}
with the positive constraints $\alpha_i,\mu_i, \xi_i\geq 0$ for all $i\in[n]$. By substituting the first order condition to the Lagrange primal function gives the dual objective function as 
\begin{align}
     \text{minimize }& -\sum_{i=1}^n \alpha_i + \frac{1}{4}\sum_{i=1}^n\sum_{j=1}^n \alpha_i \alpha_j \tilde Y_i^h \tilde Y_j^h \frac{(\mZ_i\tilde \mv)^T(\mZ_i\tilde \mv)}{\|\tilde\mv\|^2},\\&\text{ subject to } \sum_{i=1}^n\alpha_i\tilde Y_i^h = 0 \text{ and } 0\leq \alpha_i\leq \frac{\lambda}{n} \text{ for all } i\in[n].
\end{align}
Replacing back to original parameters in \eqref{eq:formula} with $\tilde\bmu = \hat\mSigma_r^{-1/2}\bmu$, $\tilde\mv = \hat\mSigma_c^{1/2}\mv$, and $\mZ_i = \hat\mSigma_r^{-1/2}(\mX_i-\bar\mX)\hat\mSigma_c^{-1/2}$ completes the first part for $\bmu^*.$ Updating $\mv$ for fixed $\bmu$ follows the same scheme so is omitted.
\end{proof}

\subsection{Determining the structural dimension}\label{sec:rank}
In practice, we need to estimate the structural dimension of $S^r_{Y|\mX}$ and $S^c_{Y|\mX}$. We propose to use a  modified  Bayesian information criterion (BIC) to estimate the unknown structural dimension $(r_1,r_2)$ in Step 4 in the previous section:
\begin{align}
    \text{BIC}(r_1) &= \sum_{i=1}^{r_1}\lambda_i(\hat\mU_n)-\lambda_1(\hat\mU_n)n^{-1/2}r_1,\\
    \text{BIC}(r_2) &= \sum_{i=1}^{r_2}\lambda_i(\hat\mV_n)-\lambda_1(\hat\mV_n)n^{-1/2}r_2,
\end{align}
where $\lambda_i(\mM)$ is the $i$-th largest eigenvector of a matrix $\mM$.  We choose the the structural dimension $(\hat r_1,\hat r_2)$ that minimizes the BIC.
Similar criteria have been used in \cite{zhu2006sliced,wang2008probability,li2011principal,artemiou2016sufficient}. The consistency for the estimated structural dimension is achieved when the Hessian matrix of \eqref{eq:obj2} is positive definite at an optimal point. Please see the details in Section~\ref{sec:disc}.

\section{Extension to higher order tensors}\label{sec:extend}
We extend the matrix sufficient dimension reduction to higher-order tensor case. Suppose that we have order-$K$ $(d_1,\ldots,d_K)$-dimensional feature tensors and responses $(\tX_i,Y_i)\in\mathbb{R}^{d_1\times \cdots\times d_K}\times \mathbb{R}$ for all $i \in[n].$ Our goal is to find $K$-number of matrices $\mU_k\in\mathbb{R}^{d_k\times r_k}$  $(r_k<d_k)$ for $k = 1,\ldots,K$ such that 
\begin{align}\label{eq:hmain}
    Y \indep \tX| \tX\times_1\mU_1\times_2\cdots\times_K\mU_K.
\end{align}
Then we keep all information of feature tensor $\tX\in\mathbb{R}^{d_1\times \cdots\times d_K}$ to predict the response $Y$ only with the reduced feature dimension $\tX\times_1\mU_1\times_2\cdots\times_K\mU_K\in\mathbb{R}^{r_1\times \cdots\times r_K}.$
Notice that  equation \eqref{eq:hmain}  is reduced to the matrix sufficient dimension reduction problem \eqref{eq:main} in the matrix case ($K = 2$). 
Similar to the matrix case, we define the central mode-$k$ subspace, denoted by $\tS^k_{Y|\tX}$, as
\begin{align}
    \tS^k_{Y|\tX} =\underset{\{\mU_k\colon Y\indep \tX|\tX\times_1\mU_1\times_2\cdots\times_K\mU_K\}}{\cap}  \text{span}(\mU_k)
\end{align}
The subspace $S_{Y|\tX} = \bigtimes_{k=1}^K S^k_{Y|\tX}$ is called the central subspace for higher-order tensor.

We propose a principal support tensor machine (PSTM) generalizing the PSMM to the higher-order tensor case.
We consider the following objective function of the PSTM at the population level.
\begin{align}\label{eq:objh}
    L(\{\bmu_k\}_{k=1}^K,t) &= \text{Var}(\tX\times_1\bmu_1\times_2\cdots\times_K\bmu_K)\\&+\lambda \mathbb{E}\left\{1-\tilde Y\left((\tX-\mathbb{E}(\tX))\times_1\bmu_1\times_2\cdots\times_K\bmu_K-t\right)\right\}_+,
\end{align}
where we define random variable $\tilde Y = \mathds{1}\{Y\in A_1)-\mathds{1}\{Y\in A_2\}$ for arbitrary disjoint subsets  $A_1$ and $A_2$ of the support of $Y$.

Using the similar proof argument in Theorem~\ref{thm:unbiased}, we can prove the following theorem.
\begin{thm}\label{thm:unbiasedh}
Suppose that $\mathbb{E}(\tX|\tX\times_1\mU_1\times_2\cdots\times_K\mU_K)$ is a multilinear function of $\tX\times_1\mU_1\times_2\cdots\times_K\mU_K$, where  $\{\mU_k\}_{k=1}^K$ are matrices as defined in \eqref{eq:hmain}. If $(\bmu_1^*,\ldots,\bmu_K^*,t^*)$ minimizes the objective function \eqref{eq:objh} among all $(\bmu_1,\ldots,\bmu_K,t)\in\mathbb{R}^{d_1}\times \cdots\times \mathbb{R}^{d_K}\times\mathbb{R}$, then $(\bmu_1^*,\ldots,\bmu_K^*)\in S_{Y|\tX}.$
\end{thm}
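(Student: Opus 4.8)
The plan is to mirror the proof of Theorem~\ref{thm:unbiased} line by line, replacing the bilinear form $\bmu^T\mX\mv$ by the multilinear functional $\tX\times_1\bmu_1\times_2\cdots\times_K\bmu_K$ and the reduced matrix $\mU^T\mX\mV$ by the reduced tensor $\tX\times_1\mU_1\times_2\cdots\times_K\mU_K$. First I would center the feature tensor, assuming without loss of generality that $\mathbb{E}(\tX)=0$, so that the functional in the hinge term agrees with the raw multilinear contraction. I would then rewrite the hinge-loss expectation as an iterated expectation by conditioning on the pair $(Y,\ \tX\times_1\mU_1\times_2\cdots\times_K\mU_K)$ via the tower property, exactly as in the matrix argument.

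Next I would apply Jensen's inequality to the convex map $x\mapsto(1-\tilde Y x)_+$ to push the conditional expectation inside the hinge. The conditional independence $Y\indep\tX\mid\tX\times_1\mU_1\times_2\cdots\times_K\mU_K$ from \eqref{eq:hmain} then lets me drop $Y$ from the conditioning set, so the relevant inner term becomes $\mathbb{E}\!\left(\tX\times_1\bmu_1\cdots\times_K\bmu_K\mid\tX\times_1\mU_1\cdots\times_K\mU_K\right)$. This reproduces the bound \eqref{eq:mean} of the matrix proof in its tensor form.

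The crucial algebraic step is to invoke the multilinearity hypothesis on $\mathbb{E}(\tX\mid\tX\times_1\mU_1\cdots\times_K\mU_K)$ to collapse this conditional mean into a single rank-one multilinear functional $\tX\times_1(\mU_1\eta_1)\cdots\times_K(\mU_K\eta_K)$ for suitable $\eta_k\in\mathbb{R}^{r_k}$. Writing the multilinear conditional mean as $(\tX\times_1\mU_1\cdots\times_K\mU_K)\times_1\mA_1\cdots\times_K\mA_K$ for matrices $\mA_k$, one recovers $\eta_k=\mA_k^T\bmu_k$ after contracting each mode against $\bmu_k$; the only care needed is the bookkeeping that mode-$k$ multiplications along the same mode compose, which is the tensor analog of the identity used silently in \eqref{eq:mean}. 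I would then apply the law of total variance, as in \eqref{eq:var}, to lower bound the variance term by $\text{Var}\!\left[\tX\times_1(\mU_1\eta_1)\cdots\times_K(\mU_K\eta_K)\right]$, and combine the two bounds to obtain $L(\{\bmu_k\}_{k=1}^K,t)\geq L(\{\mU_k\eta_k\}_{k=1}^K,t)$.

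The final ingredient is the strict-inequality argument: if $(\bmu_1,\ldots,\bmu_K)\notin S_{Y|\tX}=\bigtimes_{k=1}^K S^k_{Y|\tX}$, then $\text{Var}\!\left(\tX\times_1\bmu_1\cdots\times_K\bmu_K\mid\tX\times_1\mU_1\cdots\times_K\mU_K\right)>0$, which forces strict inequality in the total-variance bound and rules out $(\bmu_1,\ldots,\bmu_K)$ as a minimizer. I expect this step to be the main obstacle. Unlike the matrix case, membership failure in the Cartesian-product central subspace means only that $\bmu_k\notin S^k_{Y|\tX}$ for \emph{some} mode $k$, and one must argue that this single-mode deficiency propagates through the rank-one multilinear contraction to guarantee nondegenerate conditional variance; tracking how the product structure across the $K$ modes interacts with subspace membership in one mode is the delicate part, and is where the tensor proof genuinely goes beyond a routine re-indexing of Theorem~\ref{thm:unbiased}.
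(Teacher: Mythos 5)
Your proposal matches the paper's own treatment: the paper proves Theorem~\ref{thm:unbiasedh} precisely by the line-by-line tensor translation of Theorem~\ref{thm:unbiased} that you describe (tower property, Jensen on the hinge, the multilinearity collapse with $\eta_k=\mA_k^T\bmu_k$ via composition of mode-$k$ products, law of total variance, then the strict-inequality argument). The step you flag as the genuine obstacle is in fact no harder than in the matrix case, because the contraction is rank-one: for nonzero vectors, $\bmu_1\otimes\cdots\otimes\bmu_K$ lies in $\text{span}(\mU_1)\otimes\cdots\otimes\text{span}(\mU_K)$ if and only if $\bmu_k\in\text{span}(\mU_k)$ for every $k$, so a single-mode deficiency immediately forces positive conditional variance (under nondegeneracy of the distribution of $\tX$) --- exactly the assertion the paper itself makes without further proof when $K=2$.
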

Theorem~\ref{thm:unbiasedh} provides the guidance for estimating the central subspace for higher-order tensors. We estimate the central subspace by minimizing a series of objective function of the PSTM and aggregating all minimizers. Since the estimation procedure is very similar to the matrix case, we only highlight the major differences here.

In the sample level estimation, the objective function of the PSTM in Step 3 in Section~\ref{sec:psmmalg} becomes:
\begin{align}\label{eq:sampleobjh}
        \prod_{k=1}^K(\bmu_k^T\hat\mSigma_k\bmu_k) + \frac{\lambda}{n}\sum_{i=1}^n\left\{1-\tilde Y_i^h\left((\tX_i-\bar\tX)\times_1\bmu_1\times_2\cdots\times_K\bmu_K-t\right)\right\}_+,
\end{align}
where $\hat\mSigma_k$ is obtained from the flip-flop algorithm based on the tensor normal model, whose covariance has the structure $\text{Var}\left(\text{vec}(\tX)\right) = \mSigma_1\otimes\cdots\otimes \mSigma_K$. We skip the details of the flip-flop algorithm here, but note that the algorithm and its consistency for the higher-order tensor case can be found in Section 2 of \cite{franks2021near}.
To minimize the objective function \eqref{eq:sampleobjh}, we leverage support tensor machine (STM) algorithms. To be specific, let $\bmu'_k = \hat\mSigma_k^{1/2}\bmu_k$ and $\tZ_i = (\tX_i-\bar\tX)\times_1\hat\mSigma_1^{-1/2}\times_2\cdots\times_K\hat\mSigma_K^{-1/2}.$ Then, we rewrite \eqref{eq:sampleobjh} as:
\begin{align}\label{eq:sampleobjh2}
        \prod_{k=1}^K\|\bmu_k'^T\|^2 + \frac{\lambda}{n}\sum_{i=1}^n\left\{1-\tilde Y_i^h\left(\tZ_i\times_1\bmu_1'\times_2\cdots\times_K\bmu_K'-t\right)\right\}_+,
\end{align}
which is the objective function of regular STM with a rank-1 constraint introduced in \cite{kotsia2011support,kotsia2012higher}. Thus we can apply standard STM algorithm to solve the optimization problem. 
Finally, we aggregate the optimizers $(\tilde\bmu_1^h,\ldots,\tilde\bmu_K^h)$ in \eqref{eq:sampleobjh} for $h\in [H]$ and estimate the central subspace $\tS_{Y|\tX}$ as in Step 4-5 in Section~\ref{sec:psmmalg}.

\section{Numerical analysis}\label{sec:sim}
In this section, we analyze the synthetic and real world datasets to demonstrate the performance of our PSMM.
\subsection{Synthetic data}
We compare the performance of the PSMM with existing matrix sufficient dimension reduction methods. Our comparison involves two aspects.  Firstly, we compare the performance of matrix-based sufficient dimension reduction methods with a conventional vector-based method. Secondly, we compare the performance of our method with existing sufficient dimension reduction methods for matrix predictors.
\begin{itemize}
    \item Principal Support Vector Machine (PSVM)~\citep{li2011principal,artemiou2016sufficient} uses the support vector machine for sufficient dimension reduction. We vectorize feature matrices and apply this vector-based sufficient dimension reduction method.
    \item Folded Sliced Inverse Regression (folded-SIR)~\citep{li2010dimension,ding2015tensor} is based on the sliced inverse regression method proposed in \cite{li1991sliced}. Folded-SIR generalizes the vector-based method to the matrix case and involves the first order inverse moment.
    \item Folded Directional Regression (folded-DR)~\citep{li2010dimension} is based on the directional regression method  proposed in \citep{li2007directional}. Folded-DR  generalizes the vector-based method to the matrix case and involves the second order inverse moment.
\end{itemize}

We use the following models:  
\begin{align}
    &\text{Model 1}:\quad Y = \exp({\mX_{11}})+\mX_{12} + \epsilon,\\
    &\text{Model 2}:\quad Y = {\mX_{11}}/\{0.5+(\mX_{12}+1)^2\} + \epsilon,\\
    &\text{Model 3}:\quad Y = \mX_{11}(\mX_{12}+\mX_{21}+1) + \mX_{11} + \epsilon.
\end{align}
where the feature matrix $\mX\in\mathbb{R}^{d\times d}$ is i.i.d. drawn from $\tM\tN_{d,d}(0_{d\times d},\mI_d,\mI_d)$ and the noise $\epsilon$ is i.i.d. drawn from $N(0,0.2^2).$ The central subspace of feature matrix $\mX$ is $\text{span}(\{e_1\})\times \text{span}(\{e_1,e_2\})$ in Model 1 and 2, while  $\text{span}(\{e_1,e_2\})\times \text{span}(\{e_1,e_2\})$ in Model 3. We vary the sample size $n\in\{100,200,\ldots,500\}$ and matrix dimension $d\in\{5,10\}.$
We choose $q_h$ in our algorithm to be $(h/H)$-percentile of sample $\{Y_1,\ldots,Y_n\}$  for $h \in [H]$. We set the hyperparameter $H = 10$ and $\lambda = 100.$ 

We use the distance measure suggested by \cite{Li2005ContourRA,li2010dimension} to evaluate the performance of each model. Specifically, let $\tS^r$ and $\tS^c$ be the true central row and column subspace respectively while  $\hat\tS^r$ and $\hat\tS^c$ be the estimated one. Then we define the estimation error by
\begin{align}
    \text{dist}(\tS^r\otimes\tS^c, \hat\tS^r\otimes\hat\tS^c)= \left\|\mP_{\tS^r\otimes \tS^c}-\mP_{\hat\tS^r\otimes \hat\tS^c}\right\|_F,
\end{align}
where $\mP_{\tS}$ is an orthogonal projection on to the subspace $\tS$ and $\|\cdot\|_F$ is the matrix Frobenius norm. All summary statistics are averaged across 20 replicates.

Figure~\ref{fig:sim} illustrates the estimation error of various sufficient dimension reduction methods for models 1-3, evaluated across different sample sizes and feature matrix dimensions. Notably, the PSMM algorithm consistently outperforms the other methods in all scenarios.
We verified a clear advantage of matrix-based methods over the vector-based method, as all matrix-based methods outperformed the PSVM in all scenarios. In addition, the PSMM showed better performance than alternative matrix-based methods. The intuition behind this improvement can be similarly explained as in \cite{li2011principal}. Since SIR and DR methods tend to downweight the slice means near the center of the data points, they are not suitable for cases where the regression function is more accurately estimated near the center of the data points, which is often true in many cases \cite{kutner2004applied}. By contrast, the PSMM uses the hyperplane that separates the datapoints, so it does not have a downweighting effect on the data near the center. Finally, we see that all algorithms show a polynomial decaying pattern as the sample size increases, which implies the consistency of estimators. We also found that the performance of algorithms tends to decrease when the matrix dimension increases. This is not surprising because the larger matrix dimension implies a bigger space to search.

\begin{figure}
\centering
\subfigure[Estimation error for Model 1]{\includegraphics[width=1\linewidth]{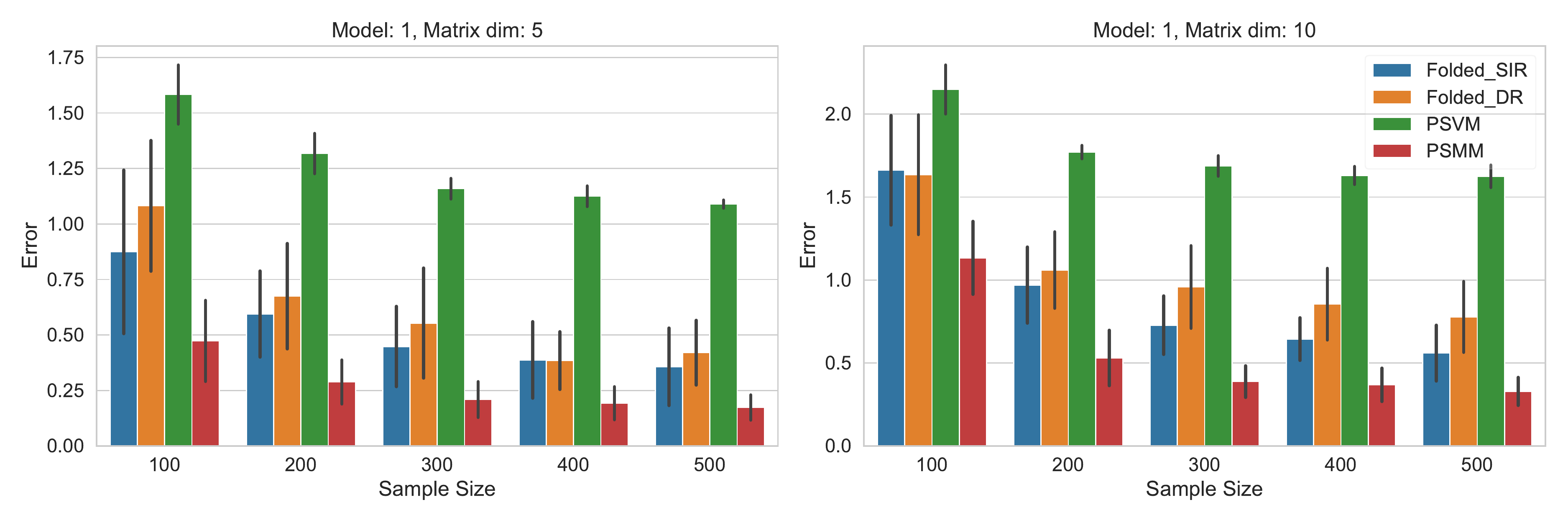}}
\subfigure[Estimation error for Model 2]{\includegraphics[width=1\linewidth]{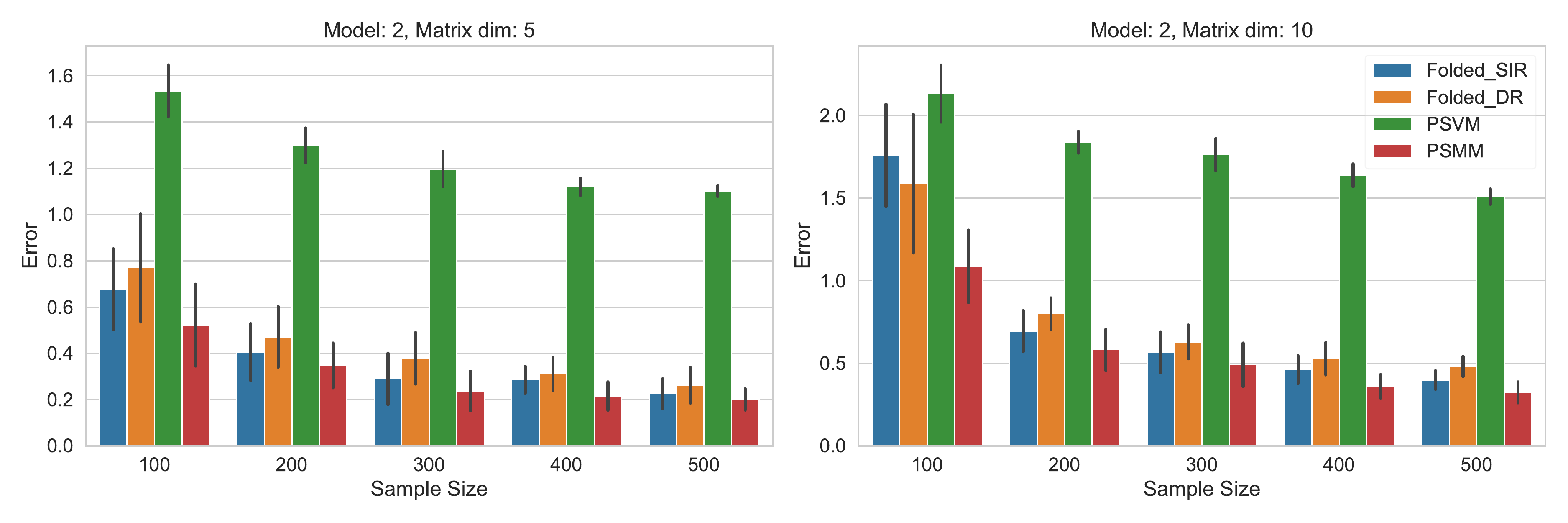}}
\subfigure[Estimation error for Model 3]{\includegraphics[width=1\linewidth]{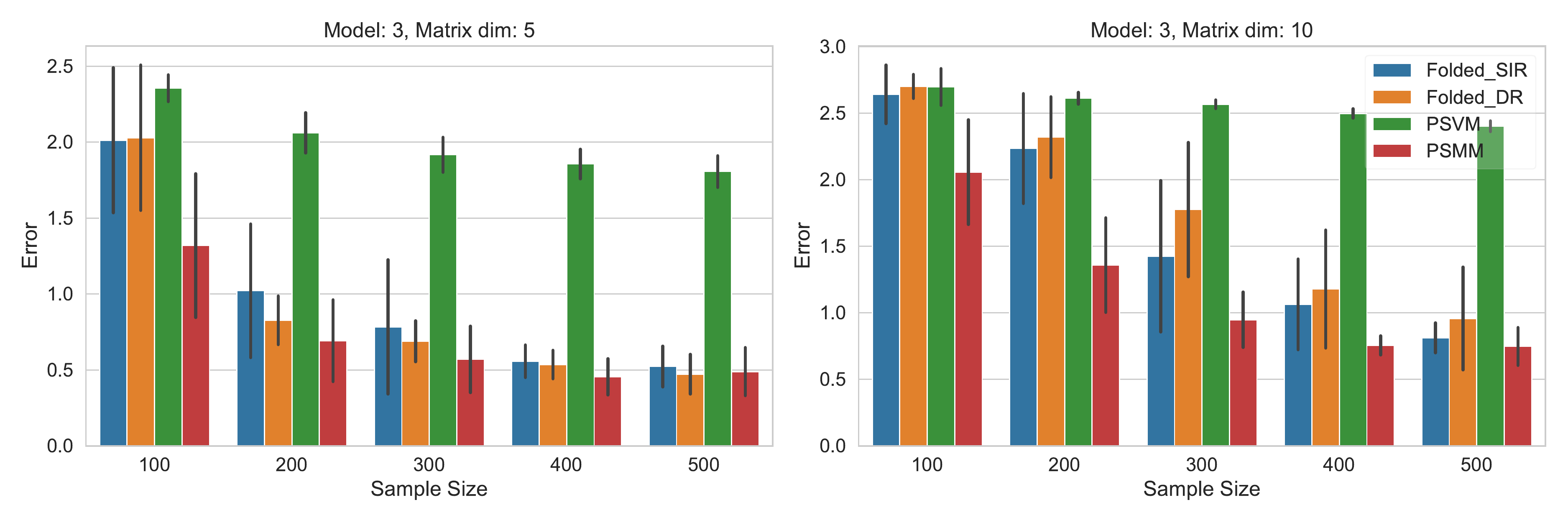}}
\caption{The estimation error of four methods across different sample size and feature matrix dimension.}
\label{fig:sim}
\end{figure}

\subsection{Application to MRN-114 human brain connectivity data}
We apply our PSMM to MRN-114 dataset. This dataset consists of the structural connectivity of the 68 brain nodes along with their cognitive ability measured by FSIQ (Full Scale Intelligence Quotient)  score for a total of 114 subjects~\citep{jung2007parieto,wang2017bayesian}. We convert the connectivity data into adjacency matrices $\mX_i \in \mathbb{R}^{68\times 68}$ for $i \in [114]$, where each entry indicates the presence or absence of fiber connections between 68 distinct brain regions.  The corresponding response to each adjacency matrix $\mX_i$ is the FSIQ score $Y_i$, which ranges from 86 to 144.  We apply the PSMM algorithm with the input $\{(\mX_i,Y_i)\}_{i=1}^{114}$ and estimate the matrix $\mU\in\mathbb{R}^{64\times r}$ such that $Y\indep \mX|\mU^T\mX\mU$. We set $r = 2$ for ease of visualization and interpretation.
Based on the estimated matrix $\hat\mU = (\hat \bmu_1, \hat \bmu_2)\in\mathbb{R}^{64\times 2}$, we calculate the reduced feature variables $V^1_i$, $V^2_i$, and $V^3_i$ for each observation $i \in [114]$, defined as follows:
\begin{align}
    V^1_i &= \hat\bmu_1^T\mX_i\hat\bmu_1 = \langle \hat\bmu_1\hat\bmu_1^T, \mX_i \rangle,\\
    V^2_i &= \hat\bmu_2^T\mX_i\hat\bmu_2 = \langle \hat\bmu_2\hat\bmu_2^T, \mX_i \rangle,\\
    V^3_i &= \hat\bmu_1^T\mX_i\hat\bmu_2 = \langle \hat\bmu_1\hat\bmu_2^T, \mX_i \rangle.
\end{align}
Figure~\ref{fig:loading} visualizes the loading matrices for the reduced feature variables $V^1$, $V^2$, and $V^3$ in order. We find that all loading matrices have a sparse structure, which implies that only some brain networks significantly explain the FSIQ score. Figure~\ref{fig:brain}(a) plots the reduced feature variables $V^1,V^2,$ and $V^3$ along with the FSIQ scores of individuals. Surprisingly, the three feature variables capture the trend of FSIQ very well. For example, individuals who have a large negative value for $V^1$, a small negative value for $V^2$, and a positive value for $V^3$ tend to have higher FSIQ scores, while those who have a small negative value for $V^1$, a large negative value for $V^2$, and a negative value for $V^3$ are inclined to have lower FSIQ scores. Furthermore, we inspected the entries of the loading matrix for $V^1$ (i.e., $\hat\bmu_1\hat\bmu_1^t$) and plotted the brain connections for the top 10 negative value, as shown in Figure~\ref{fig:brain}(b). Interestingly, a brain node called the right isthmuscingulate had multiple edges with other nodes. It is well-known that this region is involved in various cognitive and emotional processes and has been found to be associated with certain aspects of cognitive function, including intelligence~\citep{vogt2006cytology,li2014association}. In addition, we observed that the connections were mostly inter-hemispheric, excluding the connection with the right isthmuscingulate. This finding is also in agreement with recent studies on the correlation between brain connectivity and intelligence~\citep{wang2017bayesian,lee2021beyond}.

\begin{figure}
    \centering
    \includegraphics[width = \textwidth]{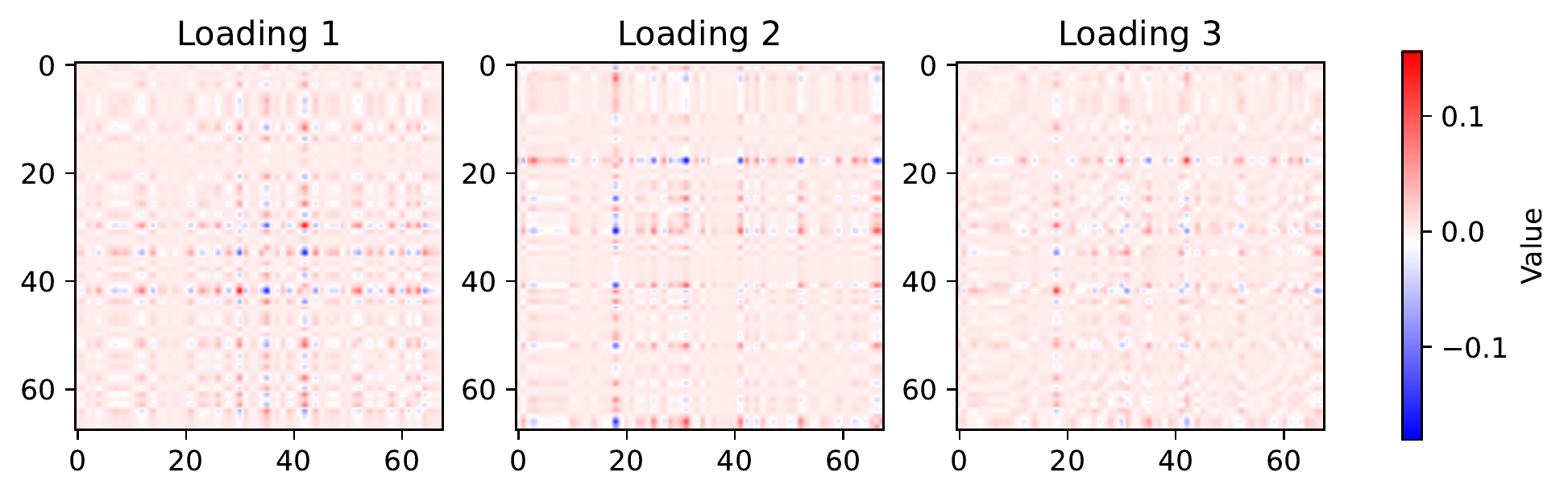}
    \caption{Loading matrices for the reduced feature variables $V^1,V^1$ and $V^3$ in order.}
    \label{fig:loading}
\end{figure}

\begin{figure}
    \centering
    \subfigure[]{\includegraphics[width = 0.49\textwidth]{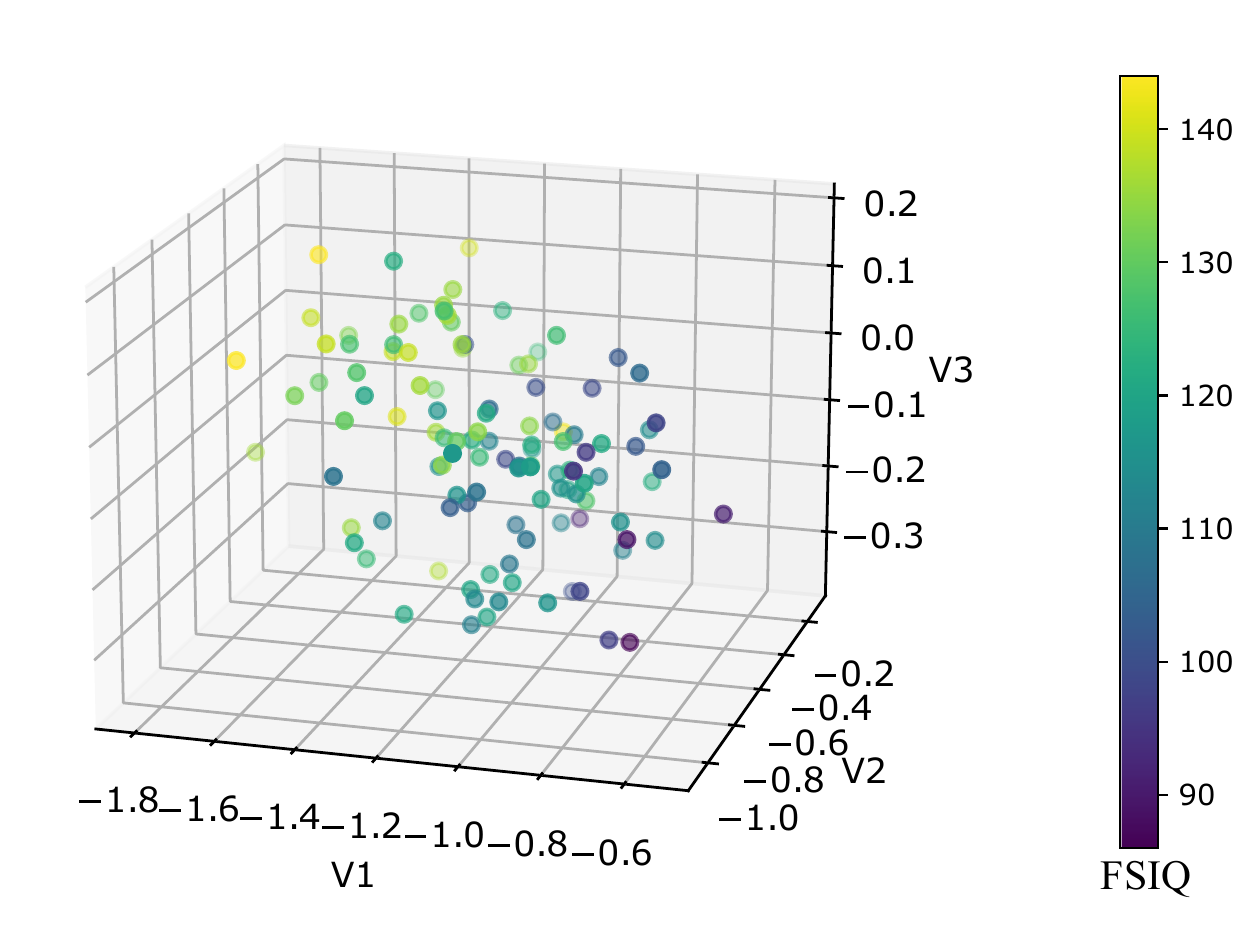}}
    \subfigure[]{\includegraphics[width = 0.49\textwidth]{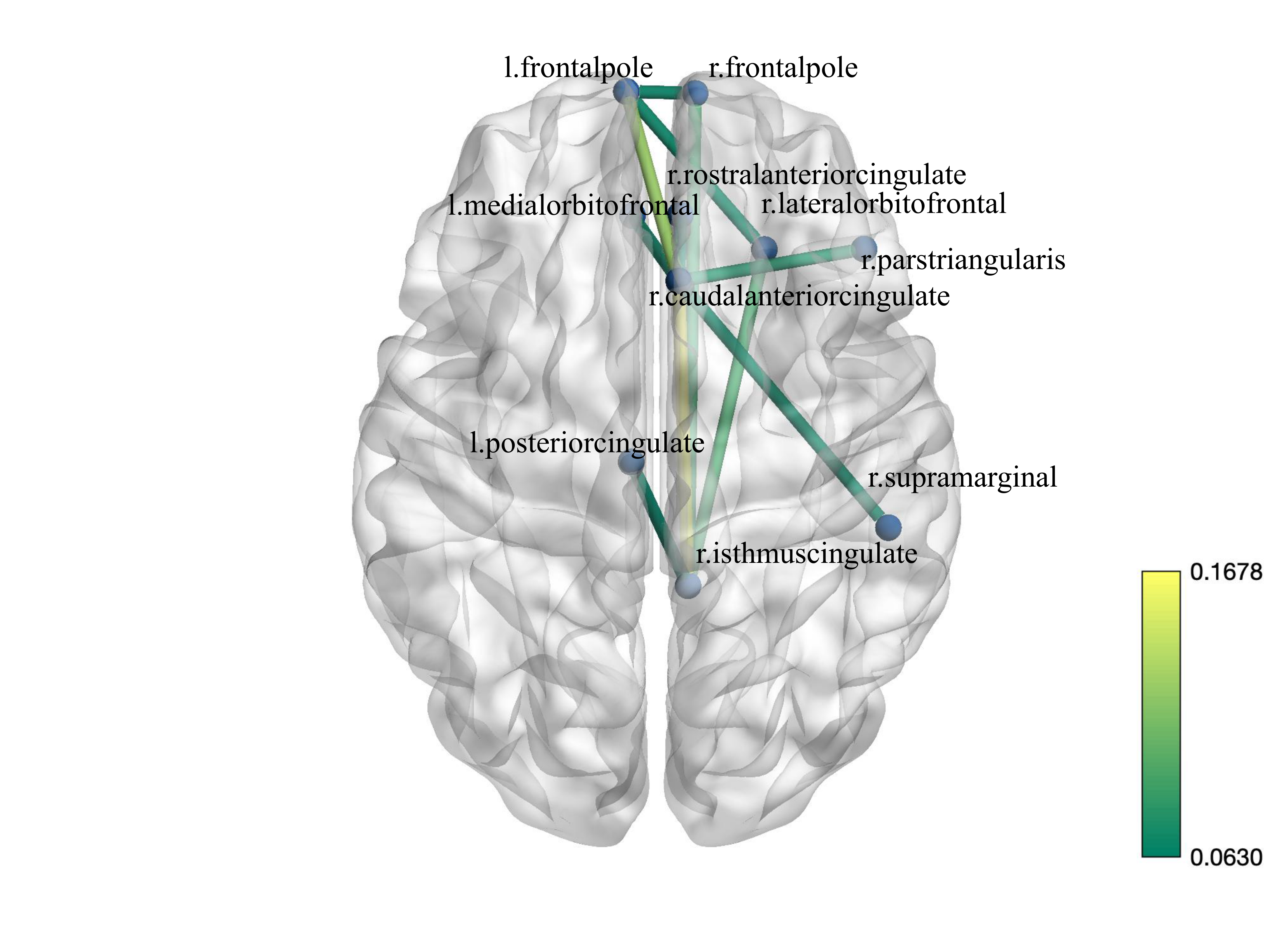}}
    \caption{(a) Scatterplot of three reduced feature variables estimated by the PSMM. The color of points shows the corresponding FSIQ scores of individuals (b) Top 10 FSIQ-associated edges in brain connectivity data.}
    \label{fig:brain}
\end{figure}

\section{Conclusion and discussion}\label{sec:disc}
We propose a new matrix sufficient dimension reduction method called the Principal Support Matrix Machine (PSMM). The PSMM preserves the matrix structure of predictors and enjoys more accurate estimation of the central subspace compared to other existing dimension reduction methods. Numerical analysis demonstrates the effectiveness and applicability of our PSMM.

There are several possible extensions from our work. Although we observe the empirical evidence that our estimation error converges with polynomial decays, we have not shown statistical convergence of the estimator.  In fact, we can leverage the asymptotic results of SVM in \cite{jiang2008estimating,koo2008bahadur} to construct the consistency. Suppose that the Hessian matrix of the objective function $L(\bmu,\mv,t)$ in \eqref{eq:obj2} is positive definite at an optimal point $(\bmu^*,\mv^*,t^*)$. Then, combining similar proof argument of Theorem 2 in \cite{jiang2008estimating} and construction of $(\hat \mU_n,\hat\mV_n)$ in  \eqref{eq:construc} yields that 
\begin{align}\label{eq:consist}
    \hat\mU_n-\mU = \tO_p(n^{-1/2}) \quad\text{and}\quad \hat\mV_n-\mV = \tO_p(n^{-1/2}).
\end{align}
Therefore, we achieve the consistency of the PSMM by \eqref{eq:consist} and \cite{bura2008distribution} such that $\hat\mU-\mU = \hat\mV-\mV =\tO_p(n^{-1/2})$, 
where $\hat\mU$ and $\hat\mV$ are outputs from the PSMM. 
Unlike vector case, however, positive definiteness of the Hessian matrix is not guaranteed. To be specific, we show that the Hessian of $L(\bmu,\mv,t)$  has the explicit form under some technical conditions as
\begin{align}
     \mH = \mH_1+\lambda\sum_{\tilde y = -1,1}(\mH_2+\mH_3)\mathbb{P}\left[\tilde Y = \tilde y\right],
 \end{align}
 where we define
\begin{align}\label{eq:Hessian}
    \mH_1 &= 2\begin{pmatrix}(\mv^T\mSigma_c\mv)\mSigma_r & 2 \mSigma_r \bmu\mv^T\mSigma_c & 0_{d_1\times 1}\\ 
    2\mSigma_c \mv\bmu^T\mSigma_r & (\bmu^T\mSigma_r\bmu)\mSigma_c &0_{d_2\times 1}\\ 0_{1\times d_1} & 0_{1\times d_2} & 0 \end{pmatrix}, \\ 
    \mH_2 &=  \mathbb{E}\left[\begin{pmatrix}\mX\mv \\\mX^T\bmu\\-1\end{pmatrix}\begin{pmatrix}\mX\mv \\\mX^T\bmu\\-1\end{pmatrix}^T\middle\vert \bmu^T\mX\mv = t+\tilde y\right]f_{\bmu^T\mX\mv|\tilde Y}(t+\tilde y),\\
  \mH_3 & = -\mathbb{E}\left[\begin{pmatrix}0 &\mX& 0\\ \mX^T&0&0\\0&0&0\end{pmatrix}\mathds{1}\{\tilde y(\bmu^T\mX\mv-t)<1\}\right].
\end{align}
Here $f_{\cdot|\cdot}$ denotes the conditional probability density function. Notice that the positive definite Hessian comes free in the SVM by its convexity. However, checking the positive definiteness of the Hessian is not trivial for the SMM due to its non-convexity. Finding an explicit condition for the Hessian matrix to be positive definite at an optimal point warrants future research.

 Constructing the consistency of the BIC is another interesting question. In Section~\ref{sec:rank}, we propose the modified BIC to estimate the true structural dimension $(r_1,r_2)$ of the central subspace. We can show the consistency of the BIC under the assumption that the Hessian in \eqref{eq:2obj} is positive definite at an optimal point. We briefly sketch the proof here. As mentioned above, the positive definite Hessian guarantees the consistency of estimator for the central subspace by Equation \eqref{eq:consist}. Under the this consistency,  we can follow the same proof argument in Theorem 8 in \cite{li2011principal}. Finally, we set constants in Theorem 8 in \cite{li2011principal} as $c_1(n) = n^{1/2}\log n$ and $c_2(k) = k$, which completes the proof for the consistency of the BIC, $\lim_{n\rightarrow\infty}\mathbb{P}(\hat r_1 = r_1)$ and $\lim_{n\rightarrow\infty}\mathbb{P}(\hat r_2 = r_2)$.


 \bibliographystyle{chicago}      
 \bibliography{stm}   

\end{document}